\tikzstyle{object}=[circle,draw=red]
\tikzstyle{agent}=[circle,draw=blue]
\tikzstyle{quantity}=[fill=white]
\def\w{\omega}
\def\bx{\mathbf{x}}
\def\G{\Gamma}
\newtheorem{definition}{Definition}
\newtheorem{proposition}{Proposition}
\newtheorem{lemma}{Lemma}
\newtheorem{example}{Example}
\newtheorem{corollary}{Corollary}
\newcommand{\norm}[1]{\| #1 \|}
\begin{document}
	
	\title{Efficient and fair trading mechanisms on the full preference domain}

\author{Jingsheng Yu\thanks{School of Economics and Management, Wuhan University. Email: yujingsheng1987@outlook.com} \quad \quad Jun Zhang\thanks{Institute for Social and Economic Research, Nanjing Audit University. Email: zhangjun404@gmail.com}
	}

\date{\today}

\maketitle
	
	\begin{abstract}\label{abstract}
    \cite{YuzhangFTTCnew} introduce a new method for defining trading mechanisms in market design and apply it to develop new mechanisms that achieves efficiency and fairness in various models. However, their assumption of strict preferences limits its applicability. This paper extends their approach to the full preference domain, allowing for indifferences while preserving key efficiency and fairness properties. As a corollary, we generalize the probabilistic serial mechanism in the house allocation model to the full preference domain without relying on operations research tools.		
	\end{abstract}
	
	\noindent \textbf{Keywords}:  market design; weak preferences; trading mechanism; efficiency; fairness
	
	\noindent \textbf{JEL Classification}: C71, C78, D71
	%Absorbing set in Leontief model.
	
\thispagestyle{empty}
\setcounter{page}{0}	
	\newpage

\section{Introduction}\label{section:intro}

Trading mechanisms are widely used in market design to find efficient allocations. Whenever an allocation is inefficient, there exist opportunities for agents to exchange their assignments to improve efficiency. The traditional method for defining trading mechanisms originates from the definition of the top trading cycle (TTC) mechanism \citep{ShapleyScarf1974}. In this method, agents and objects, represented as nodes in a directed graph, point to each other according to preferences and ownership. Nodes forming a cycle then exchange objects among themselves. This method is desirable in specifying the details of the trading process, including who trades with whom and how trades occur. However, in complex environments where the resulting graphs are intricate, this method can be challenging to use, especially when addressing fairness considerations that require the integration of randomization into a trading process without compromising efficiency. To address this challenge, \cite{YuzhangFTTCnew} introduce a method that utilizes parameterized linear equations to define a trading process. Unlike the traditional approach, this method characterizes only the outcome of a trading process, that is, who obtains what and supplies what. This abstraction avoids complications that can arise in complex environments. Moreover, by adjusting the equation parameters, the method allows for flexible control of the trading process and the achievement of fairness. The authors apply this method to develop new mechanisms that achieve efficiency and fairness in various models, including fractional endowment exchange, priority-based allocation, and house allocation with existing tenants.

Many papers in the literature, including \cite{YuzhangFTTCnew}, assume that agents have strict preferences. Although this assumption accommodates many applications, there are environments where agents naturally have weak preferences.\footnote{See \cite{bogomolnaia2005strategy} and \cite{erdil2017two} for discussions on weak preferences.} To preserve efficiency, preference ties cannot be arbitrarily broken before executing a trading mechanism. This paper introduces a method to generalize the trading mechanisms developed by \cite{YuzhangFTTCnew} to the full preference domain in which agents may have indifferent preferences, while preserving desirable efficiency and fairness properties of those mechanisms.

A simple example illustrates why preference ties cannot be arbitrarily broken. Consider a fractional endowment exchange problem with two agents, $ i $ and $ j $, who each own equal probability shares of two objects, $ a $ and $ b $; that is, they each own $ (1/2a, 1/2b) $. Agent $ i $ is indifferent between $ a $ and $ b $, while $ j $ strictly prefers $ a $ to $ b $. If the ties are broken such that $ i $ strictly prefers $ a $ to $ b $, then in the unique individually rational allocation, both agents will retain their endowments. However, the only efficient and individually rational allocation that respects their true preferences is the one in which $ i $ receives $ b $ and $ j $ receives $ a $; that is, $ i $ supplies $ 1/2a $ to $ j $ in exchange for $ 1/2b $ from $ j $.

Our method for solving indifferent preferences is to utilize the trading process. At any step of a trading mechanism, if an agent receives a positive amount of an object that has been exhausted in the trading process but finds indifferent objects among those still available, we let the agent label her received amount of that object as an endowment that is available for trading. In subsequent steps, if the agent loses some amount of that object, she will be compensated with an equal amount of indifferent objects through the trading process. This ensures that the agent is not harmed but may benefit others who strictly prefer the exhausted object over the other available objects. In the above example, if the ties are broken such that $ i $ strictly prefers $ a $ to $ b $, then at step one, both $ i$ and $j $ receive their own endowments $ 1/2a $ through self-trade. After that, $ a $ is exhausted. At step two, since $i$ is indifferent between $a$ and $b$ and $ b $ is still available,  $ i $ labels her consumed $ 1/2a $ as available for trading. Then, $ j $ demands $ i $'s $ 1/2a $, and $ i $ demands $ b $. In the trading outcome, $ j $ receives $ i $'s $ 1/2a $, and $ i $ receives $ b $, consisting of her own $ 1/2b $ and $ j $'s $ 1/2b $.% This produces the unique desirable allocation.

The above example illustrates a simple situation. In general, after an agent labels her received amount of an object as available, it may induce another agent to label her received amount of a different object as available when the latter agent is indifferent between the two objects. This can trigger a chain of sequential labeling behavior. To preserve the efficiency of a trading mechanism, it is crucial to identify all such chains during the algorithm. Therefore, we add a labeling stage at each step of a trading mechanism to accomplish this.

We apply our method to extend the balanced trading mechanisms (BTM) developed by \cite{YuzhangFTTCnew} for the fractional endowment exchange model. The model generalizes the housing market model by allowing agents to own fractional endowments. We prove that these extensions preserve efficiency and fairness properties that hold on the strict preference domain. Our extensions of BTM can be similarly adapted to obtain extensions of the trading mechanisms developed by \cite{YuzhangFTTCnew} for the priority-based allocation model and for the house allocation with existing tenants model. For brevity, we omit the details for these two models. Instead, we focus on the house allocation model in the second part.

\cite{YuzhangFTTCnew} show that in the house allocation model with strict preferences, if we regard the model as a special case of the fractional endowment exchange model where agents own equal divisions of all objects, then every BTM is equivalent to a simultaneous eating algorithm (SEA) defined by \cite{bogomolnaia2001new}, and every BTM that treats all agents equally at each step is equivalent to the probabilistic serial (PS) mechanism. 
This implies that our extension of BTM subsumes an extension of PS to the full preference domain. Interestingly, our extension of PS can be described as a SEA that differs from the original PS in two respects. First, agents are required to label their consumption as available for others to consume when they find indifferent available objects. Second, when an agent's labeled consumption is being eaten by others, her eating rate is instantly increased according to a rule called ``you request my house—I get your rate.''\footnote{\cite{YuzhangFTTCnew} use this rate-adjusting rule to obtain an extension of PS to the house allocation with existing tenants model under strict preferences.} This rule ensures that an agent whose available consumption is consumed by others is compensated with an equal amount of indifferent objects.

An extreme case of weak preference is dichotomous preference, where agents classify objects as either acceptable or unacceptable and view all acceptable objects as indifferent. \cite{bogomolnaia2004random} propose the egalitarian solution for this environment, which selects an efficient allocation that maximizes equality in agents' welfare. We prove that our extension of PS finds the egalitarian solution under dichotomous preferences.

In the literature, \citet{KS2006} use tools from network flow theory to develop an extension of PS to the full preference domain. In the special case of dichotomous preference, their algorithm is identical to one that computes a lexicographically optimal flow in a parametric network, which produces the egalitarian solution of \citeauthor{bogomolnaia2004random}. Their extension of PS is an iterative application of this algorithm on the full preference domain. \cite{AS2011} develop this tool to obtain an extension of their mechanism for the fractional endowment exchange model to the full preference domain. In contrast, our method does not rely on similar tools from operations research. 

The fractional endowment exchange model reduces to the housing market model when all agents hold integer endowments. Therefore, under strict preferences, every BTM reduces to TTC. This implies that our extension of BTM also subsumes an extension of TTC to the full preference domain. In this case, the definition of BTM resembles the extension of TTC proposed by \cite{jaramillo2012difference}. To preserve strategy-proofness of TTC, \cite{jaramillo2012difference} carefully impose pointing rules in the algorithm. However, since we do not pursue such incentive properties, our definition is intentionally kept general by allowing agents to point to multiple indifferent objects at each step. 

The rest of the paper is organized as follows. Section \ref{section:FEEmodel} presents the fractional endowment exchange model. Section \ref{section:FTTC:fulldomain} presents our extension of BTM. Section \ref{section:property} proves that our extension maintains the efficiency and fairness properties of BTM under strict preferences. Section \ref{section:house:allocation} discusses the extension of PS implied by our method and proves that it finds the egalitarian solution of \cite{bogomolnaia2004random} under dichotomous preferences. Section \ref{section:conclusion} concludes the paper.

\section{The fractional endowment exchange model}\label{section:FEEmodel}

A \textbf{fractional endowment exchange} (FEE) economy is represented by a quadruple $ \G=(I,O,\succsim_I, \w) $ where $ I$ is a finite set of agents, $ O $ is a finite set of objects, $ \succsim_I=(\succsim_i)_{i\in I} $ is a preference profile of agents, and $ \w=(\w_{i,o})_{i\in I,o\in O} \in [0,1]^{I\times O} $ is an endowment matrix. For each $ i\in I $, $ \succsim_i $ represents $ i $'s preferences over $ O $, which is complete and transitive but may not be strict. Let $ \succ_i $ and $ \sim_i $ denote the asymmetric and symmetric components of $ \succsim_i $, respectively. In the matrix $ \w $, $ \w_{i,o} $ represents the amount of object $ o $ owned by agent $ i $; $ \w_i=(\w_{i,o})_{o\in O}$ represents $ i $'s endowments; $ \w_o=(\w_{i,o})_{i\in I}$ represents the distribution of $ o $ among agents' endowments. Each agent demands the equivalent of one unit of an object, and their endowment does not exceed this demand; that is, for all $ i\in I $, $ \sum_{o\in O} \w_{i,o}\le 1 $. Let $ q_o = \sum_{i\in I}\w_{i,o} $ denote the quota of object $ o $ in the economy. In the allocation of indivisible objects, $ \w_{i,o} $ represents the probability share of object $ o $ initially owned by agent $ i $, and $ q_o $ is typically an integer. However, we allow objects to have non-integer quantities. This accommodates another interpretation of the model in which objects are infinitely divisible and might be available in any quantities. This interpretation has been employed by \cite{Bogomolnaia2015random} and others.

\paragraph{Allocation} In an economy $ (I,O,\succsim_I, \w) $, an \textbf{assignment} for agent $ i $ is a vector $ p_i\in [0,1]^{O} $ such that $ \sum_{o\in O}p_{i,o}\le 1 $. The \textbf{size} of $ p_i $ is the total amount of objects it includes, defined as $ \norm{p_{i}}=\sum_{o\in O} p_{i,o}$. The \textbf{support} of $ p_i $ is the set of objects it includes, defined as $ \text{supp}(p_i)=\{o\in O: p_{i,o}>0\}$. An assignment $ p'_i=(p'_{i,o})_{o\in O} $ is a \textbf{subassignment} of $ p_i $ if, for all $ o\in O $, $ p'_{i,o}\le p_{i,o} $. Let $ 2^{p_i} $ denote the set of all subassignments of $ p_i $. 

Given preferences $ \succsim_i $, $p_i$ \textbf{weakly (first-order) stochastically dominates} $p_i'$, denoted by $ p_i\succsim^{sd}_i p'_i $, if $ \sum_{o^{\prime }\succsim _{i}o}p_{i,o^{\prime }}\geq\sum_{o^{\prime }\succsim _{i}o}p'_{i,o^{\prime }}$ for all $ o\in O $. If the inequality is strict for some $ o $, $p_i$ \textbf{strictly stochastically dominates} $p'_i$, denoted by $ p_i \succ^{sd}_i p'_i $. For any $ x\in [0, \norm{p_i}] $,  $ p_i[x] $ denotes a subassignment of $ p_i $ such that $ \norm{p_i[x]}=x $ and $ p_i[x] \succsim^{sd}_i p'_i $ for all $ p'_i\in 2^{p_i} $ with $ \norm{p'_i}=x $. In words, $ p_i[x] $ is the best subassignment of $ p_i $ with size $ x $ for $ \succsim_i $.

An allocation specifies an assignment for each agent, subject to the supply constraints of objects. Since we are considering exchange economies without monetary transfers, we are particularly concerned with allocations resulting from one-for-one exchanges. Therefore, an \textbf{allocation} is represented by a matrix $p=(p_{i,o})_{i\in I, o\in O}\in [0,1]^{I\times O}$ such that, for all $ o\in O$, $\sum_{i\in I}p_{i,o}= q_o$, and, for all $ i\in I$, $\sum_{o\in O}p_{i,o}=\sum_{o\in O}\w_{i,o}$. For each $ i\in I $, $p_{i}=(p_{i,o})_{o\in O}$ is the assignment received by $ i $.  An allocation $p$ weakly stochastically dominates another $p^{\prime }$, denoted by $p \succsim^{sd}_I p^{\prime }  $, if $p_{i}\succsim^{sd}_i p_{i}^{\prime }$ for all $i\in I$; $ p $ strictly stochastically dominates $ p' $, denoted by $p \succ^{sd}_I p^{\prime }  $, if there further exists an agent $ j $ such that $p_{j} \succ^{sd}_j p_{j}^{\prime }$. %If all elements of $ p $ are integers, then $p$ is a \textbf{deterministic} allocation. The Birkhoff-von Neumann theorem and its generalization \citep{birkhoff1946three,von1953certain,kojima2010incentives} ensure that every allocation is a convex combination of deterministic allocations. 

An allocation $ p $ is \textbf{sd-efficient} if it is not strictly stochastically dominated by any other allocation.  It is \textbf{individually rational} (IR) if, for all $ i\in I $, $ p_i \succsim^{sd}_i \w_i $.% It is \textbf{balanced} if, for all $ i\in I $, $ \norm{p_i}=\norm{\w_i} $; that is, agents receive assignments equal in size to their endowments. If an allocation is IR, it must be balanced. 

We define four fairness axioms, ranked from the weakest to the strongest: \textbf{equal treatment of equals} (ETE), \textbf{equal-endowment no envy} (EENE), \textbf{bounded envy} (BE), and \textbf{no envy} (NE). Formally, in an allocation $ p $, agent $ i $ is said to \textbf{envy} another $ j $ if $ p_i\not\succsim^{sd}_i p_j $. Then, $ p $ satisfies
\begin{itemize}
	\item ETE if, for all $ i,j\in I $ such that $ \w_i=\w_j $ and $ \succsim_i=\succsim_j $, $ p_i= p_j $;
	
	\item EENE if, for all $ i,j\in I $ such that $ \w_i=\w_j $,  $ p_i\succsim^{sd}_ip_j $ and $ p_j\succsim^{sd}_jp_i $;
	
	\item BE if, for all $ i,j \in I$, $ \max_{o\in O} \big[\sum_{o'\succsim_i o} p_{j,o'}-\sum_{o'\succsim_i o}p_{i,o'}\big]\le \sum_{o\in O:\w_{j,o}> \w_{i,o}} \big(\w_{j,o}-\w_{i,o} \big) $;
	
	\item NE if, for all $ i,j \in I$, $ p_i\succsim^{sd}_ip_j $ and $ p_j\succsim^{sd}_jp_i $.
\end{itemize}

ETE and EENE are standard fairness axioms in the literature. BE, introduced by \cite{YuzhangFTTCnew}, imposes fairness requirements on any two agents, whether they hold equal or unequal endowments. It requires that if an agent $ i $ envies another $ j $, $ j $'s advantage in the allocation relative to $ i $ must be bounded by her relative advantage in endowments. For instance, if $ j $ owns slightly more endowments than $ i $ but receives a much better assignment, the allocation is considered unfair.  If $ i $ envies $ j $, there must exist object $ o $ such that $ \sum_{o'\succsim_i o} p_{j,o'}>\sum_{o'\succsim_i o}p_{i,o'} $. In the definition of BE, we use $ \max_{o\in O} \big[\sum_{o'\succsim_i o} p_{j,o'}-\sum_{o'\succsim_i o}p_{i,o'}\big] $ to measure the intensity of $ i $'s envy towards $ j $. Note the right-hand side of the inequality sums only over objects $ o $ such that $ \w_{j,o}> \w_{i,o} $. When $ i $ envies $ j $, $ j $ does not need to own more of every object than $ i $'s. For example, suppose that $ i $ owns a unit of $ a $, $ j $ owns $ (1/2b, 1/2c) $, and they all least prefer $ a $. If the two agents obtain their own endowments in an allocation, then $ i $ envies $ j $, and this allocation satisfies BE.

NE is arguably the strongest among all possible fairness axioms. While it is compatible with IR in special cases, they are generally incompatible in the FEE model.

\paragraph{Mechanism} A \textbf{mechanism} $ \psi $ finds an allocation for each economy. When there is no confusion to represent an economy by its preference profile $ \succsim_I$, let $ \psi(\succsim_I) $ denote the allocation found by $ \psi $  and $ \psi_i(\succsim_I) $ the assignment to agent $ i $. A mechanism satisfies an efficiency or fairness axiom if its allocations for all economies satisfy the axiom. 

An agent $ i $ is able to \textbf{manipulate} a mechanism $\psi $ in an economy $ \succ_I $ by reporting $ \succ'_i\neq\succ_i $ if $\psi_{i}(\succsim_{I}) \not\succsim^{sd}_i \psi _{i}(\succsim^\prime_{i},\succsim_{-i})$. An agent $ i $ is able to \textbf{strongly manipulate} $\psi $ in $ \succ_I $ by reporting $ \succ'_i \neq \succ_i $ if $\psi _{i}(\succsim^\prime_{i},\succsim_{-i}) \succ^{sd}_i \psi_{i}(\succsim_{I})$. A mechanism $ \psi $ is \textbf{(weakly) strategy-proof} if it cannot be (strongly) manipulated by any agent in any economy.

Bounded invariance is a property of mechanisms requiring that an agent cannot change the allocation of an object by changing the reported ordering of less preferred objects. Formally, given the set of objects $ O $, for any preferences $ \succ_i $ and any object $ o $, define $ U(\succsim_i,o)=\{o'\in O: o'\succsim_i o\} $ as the upper contour set of $ o $ in $ \succsim_i $. Define $ \succsim_i(o)=\succsim_i|_{U(\succsim_i,o)} $. Then, a mechanism $\psi$ satisfies \textbf{bounded invariance} if, for any economy $ \Gamma= (I,O,\succsim_I, \w) $, any $ i\in I $, any $ o\in O $, and any $ \succsim'_i \neq \succsim_i $ such that $ \succsim_i(o)= \succsim'_i(o)$, we have, for all $ j\in I $, $\psi_{j,o}(\succsim_I)=\psi_{j,o}(\succsim_{I\backslash \{i\}},\succsim'_i)$. Bounded invariance is independent of strategy-proofness.

\paragraph{Classical models} The \textbf{housing market model} is a special case of the FEE model where $ |I|=|O| $, $ q_o=1 $ for all $ o\in O $, and $ \w $ is a permutation matrix (i.e., each agent owns a distinct object). In the \textbf{house allocation model}, all objects are collectively owned by all agents. In this paper, we regard it as a special case of the FEE model where agents own equal divisions of all objects; that is, for all $ i\in I $ and all $ o\in O $, $ \w_{i,o}=\frac{q_o}{|I|}$. For this sake, we assume that objects are no more than agents, that is, $ \sum_{o\in O}q_o \le |I| $. Then, these two classical models are unified within the FEE model.

 %In the simplest form of the \textbf{house allocation model}, $ |I|=|O| $, $ q_o=1 $ for all $ o\in O $, and all objects are collectively owned by all agents. It is regarded as a special case of the FEE model in which agents own equal divisions of all objects; that is, for all $ i\in I $ and all $ o\in O $, $ \w_{i,o}=1/|I|$. 

\section{Balanced trading mechanisms for the FEE model}\label{section:FTTC:fulldomain}

\subsection{Definition of BTM under strict preferences}\label{section:FTTC:strictpref}

Assuming strict preferences, \cite{YuzhangFTTCnew} introduce the class of BTM for the FEE model. At each step, after each remaining agent reports her unique favorite remaining object, a linear equation system characterizes the trading process. The solution to the equation system specifies the amount of her favorite object each agent receives and the amount of each endowment she loses at the step. We select the maximum solution subject to the supply constraints of objects, which is equivalent to letting agents trade as many objects as possible at each step.  Every BTM is IR and sd-efficient, and a subclass of them achieve fairness when the parameters in the equation systems satisfy required conditions. 

To facilitate our discussion, we present the definition under strict preferences. For any economy $ (I,O,\succsim_I, \w) $, at the beginning of each step $ d$, $ I(d) $ denotes the set of remaining agents; $ O(d) $ denotes the set of remaining objects; $ \w(d)=(\w_{i,o}(d))_{i\in I,o\in O} $ denotes the remaining endowments of agents; $ p(d)=(p_{i,o}(d))_{i\in I,o\in O} $ denotes the allocation  found before that step. For each $ i\in I(d) $, $ o_i(d) $ denotes $ i $'s favorite object among $ O(d) $. In the trading process at step $ d $, for each $ i\in I(d) $, $ x_i(d) $ denotes the amount of $ o_i(d) $ that $ i $ receives; for each $ o\in O(d) $, $ x_o(d) $ denotes the amount of $ o $ assigned to agents. Since the assigned amount of $ o $ comes from the endowments of its owners, for each $ i\in I(d) $, we introduce a parameter $ \lambda_{i,o}(d) $ to denote the proportion of $x_o(d) $ that comes from $ i $'s endowment. That is, $ i $ supplies $ \lambda_{i,o}(d) x_o(d) $ of $ o $ to the trading process at step $ d $. Therefore, for all $ o\in O(d) $ and all $ i\in I(d) $,
	$ \lambda_{i,o}(d)\in [0,1] $, $ \sum_{i\in I(d)}\lambda_{i,o}(d)=1 $, and $ \lambda_{i,o}(d)>0 $ only if $ \w_{i,o}(d)>0 $. 
We impose the constraints $ \lambda_{i,o}(d) x_o(d)\le \w_{i,o}(d) $ to ensure that agents do not trade more objects than their endowments.

%Now, we are ready to define the  for the FEE model.

\begin{center}
	\textbf{BTM under Strict Preferences}
\end{center}

\begin{itemize}
	\item[] \textbf{Initialization}: For any $ (I,O,\succsim_I, \w) $,
	$ I(1)=I $, $ O(1)=O $, $ \w(1)=\w$, and $ p(1)=\mathbf{0} $.

	\item[] \textbf{Step} $ d\ge 1 $: Each $ i\in I(d) $ demands her favorite remaining object $ o_i(d) $. Given parameters $ \Lambda(d)=(\lambda_{i,o}(d))_{i\in I(d),o\in O(d)} $ chosen by the mechanism designer, let $ \mathbf{x}^*(d)=(x^*_a(d))_{a\in I(d)\cup O(d)} $ denote the maximum solution to the equation system 
	\begin{equation}\label{equation7}
		\begin{cases}
			x_o(d)=\sum_{i\in I(d)} \mathbf{1}\{o_i(d)=o\} x_i(d)  & \text{for all } o\in O(d),\\
			x_i(d)=\sum_{o\in O(d)} \lambda_{i,o}(d) x_o(d) &  \text{for all }i\in I(d),
		\end{cases}
	\end{equation}
	subject to the constraints
	\begin{equation}\label{equation9}
		\lambda_{i,o}(d) x_o(d) \le \w_{i,o}(d) \quad \text{ for all }i\in I(d) \text{ and }o\in O(d).
	\end{equation}
	
	Let agents trade endowments according to $ \bx^*(d) $. Therefore, for all $ i\in I $ and all $ o\in O $, 
	
	\[
	\w_{i,o}(d+1)=\begin{cases}
		\w_{i,o}(d)-\lambda_{i,o}(d)x_o(d) & \text{ if }i\in I(d) \text{ and }o\in O(d),\\
		0 & \text{ otherwise,}
	\end{cases}
	\]
	and
	\[
	p_{i,o}(d+1)=\begin{cases}
		p_{i,o}(d)+x^*_i(d) & \text{ if }i\in I(d) \text{ and }o=o_i(d),\\
		p_{i,o}(d) & \text{ otherwise.}
	\end{cases}
	\]
	
	Let $ I(d+1)=\{i\in I(d):\sum_{o\in O} \w_{i,o}(d+1)>0\} $ and $ O(d+1)=\{o\in O(d):\sum_{i\in I(d+1)} \w_{i,o}(d+1)>0\} $. If $ O(d+1) $ is empty, stop the algorithm and return $p(d+1)$. Otherwise, go to step $ d+1 $.
\end{itemize}

BTM is a class of mechanisms because varying the equation parameters yields different procedures. Among this class, an particularly appealing mechanism is the so-called \textbf{Equal-BTM} (denoted by $ \psi^{E} $) as it embodies an intuitive fairness principle. It uses the following parameters to ensure that, at each step $ d $, the remaining owners of each remaining object use equal amounts of the object for trading: for all $ i\in I(d) $ and all $ o\in O(d) $,
\begin{equation}\label{equation:equal-BTM}
	\lambda_{i,o}(d)=\begin{cases}
		\dfrac{1}{|j\in I(d):\w_{j,o}(d)>0|} & \text{if }\w_{i,o}(d)>0; \\
		0 & \text{if }\w_{i,o}(d)=0.
	\end{cases}
\end{equation}

\subsection{Extension to the full preference domain}
On the full preference domain, agents may view some objects as indifferent. To accommodate indifferent preferences in the procedure of BTM, our approach is to let agents label their assigned objects as available for further trading if they find indifferent objects among those still available. We then utilize the trading process to ensure that every agent's welfare is unaffected if they provide received objects to the others. To implement this approach, we divide each step of BTM into three stages. Each step begins with the \textbf{labeling} stage, where agents label their received objects for trading. Then, the \textbf{pointing} stage determines agents' demanded objects. Finally, in the \textbf{trading} stage, a linear equation system characterizes the trading process.

At any step, in the labeling stage, the first group of agents who label their received objects are those who find indifferent objects among those still available. After that, a sequence of other agents may be inductively induced to label their received objects as available. Every such sequence can be represented by a chain:
\begin{equation*}\label{chain1}
\color{blue}i_{m+1}\rightarrow \color{black} o_{m}\rightarrow i_m \rightarrow \cdots  o_{k}\rightarrow i_{k}\rightarrow o_{k-1} \rightarrow 
\cdots o_1 \rightarrow i_{1}\rightarrow o_{0},
\end{equation*}
where $ o_0 $ is an object still available in agents' remaining endowments, every other object in the chain has been exhausted in the trading process, and for each $ k=1,\dots,m $, $ i_k $ is an agent who is indifferent between $ o_{k-1} $ and $ o_k $ and has received a positive amount of $ o_{k} $. Since $ i_1 $ is indifferent between $ o_0 $ and $ o_1 $, $ i_1 $ first labels her received amount of $ o_1 $ as available. After that, it induces $ i_2 $ to label her received amount of $ o_2 $ as available, and so on, until $ i_m $ labels her received amount of $ o_m $ as available. Then, in the pointing stage, if an agent $ i_{m+1} $ strictly prefers an object in the chain (say $o_m$) over any other available object, we let $ i_{m+1} $ demand that object. In the trading stage, if $ i_{m+1} $ receives a positive amount of that object, the other agents in the chain are compensated with an equal amount of indifferent objects.% Identifying these sequences of available consumption for trading is crucial for preserving the sd-efficiency of BTM. 

To formally define the mechanisms, in addition to the notations in Section \ref{section:FTTC:strictpref}, we introduce a few others. 
At each step $ d$, $L(d)$ denotes the set of agents who label their received objects as available; for each $i\in L(d)$, $\tilde{O}_i(d)$ denotes the set of objects labeled by $ i $ as available; $\tilde{O}(d)=\cup_{i\in L(d)} \tilde{O}_i(d) $ denotes the set of labeled objects; $ \overline{O}(d)=O(d)\cup \tilde{O}(d) $ denotes the set of available objects, including those available in agents' remaining endowments and those labeled by agents; for each $ i\in I(d) $, $ A_i(d) $ denotes the set of objects demanded by $ i $. We introduce parameters $ \gamma_{i,o}(d)$ to represent how $ i $ divides her demand among the objects in $ A_i(d) $. So, $ \gamma_{i,o}(d)\in [0,1]$ and $ \sum_{o\in A_i(d)} \gamma_{i,o}(d)=1$.\footnote{For instance, if we want $ i $ to demand only one object when there are several indifferent objects, we can set $ \gamma_{i,o}(d)=1 $ for some $ o\in A_i(d) $ and $ \gamma_{i,o'}(d)=0 $ for all other $ o'\in A_i(d) $.}

\begin{center}
	\textbf{BTM on the full preference domain}
\end{center}

\noindent \textbf{Initialization}: For any $ (I,O,\succsim_I, \w) $, $ O(1)=O $, $ \w(1)=\w$, and $ p(1)=\mathbf{0} $.

\noindent \textbf{Step} $ d\ge 1 $: Each step consists of three stages.

\begin{enumerate}
	
	\item \textbf{Labeling}: Agents label their received objects as available  in multiple rounds. 
	
	\begin{itemize}
		\item Round 1: If any $ i\in I $ is indifferent between any $ o \in O(d) $ and any $ o'\in O\backslash O(d) $ such that $ p_{i,o'}(d)>0 $, $ i $ labels her received amount of $ o' $ as available. Let $ \tilde{O}_i (d) $ denote the set of objects labeled by $ i $. Let $ L_1(d) $ denote the set of such agents $ i $. Let $ \tilde{O}_1(d)=\cup_{i\in L_1(d)}\tilde{O}_i (d)  $. 
		
		\item Round 2: If any $ i\in I\backslash L_1(d) $ is indifferent between any $ o\in \tilde{O}_1(d) $ and any $ o'\in O\backslash [O(d)\cup \tilde{O}_1(d)] $ such that $ p_{i,o'}(d)>0 $, $ i $ labels her received amount of $ o' $ as available.\footnote{It is without loss of generality to exclude $L_1(d)$ from Round 2 because if any $i\in L_1(d)$ wants to label multiple objects in her consumption as available, she must have done so in Round 1. An agent will never label two objects in her consumption that are not indifferent to her as available at the same step.} Let $ \tilde{O}_i (d) $ denote the set of objects labeled by $ i $. Let $ L_2(d) $ denote the set of such agents $ i $. Let $ \tilde{O}_2(d)=\cup_{i\in L_2(d)}\tilde{O}_i (d)  $.
		
		\item[]$ \vdots $ 
		
		\item Round n: If any $ i\in I\backslash \cup_{k=1}^{n-1}L_k(d) $ is indifferent between any $ o\in \tilde{O}_{n-1}(d) $ and any $ o'\in O\backslash [O(d)\cup \tilde{O}_1(d) \cup \cdots \cup \tilde{O}_{n-1}(d) ] $ such that $ p_{i,o'}(d)>0 $, $ i $ labels her received amount of $ o' $ as available. Let $ \tilde{O}_i (d) $ denote the set of objects labeled by $ i $. Let $ L_n(d) $ denote the set of such agents $ i $. Let $ \tilde{O}_n(d)=\cup_{i\in L_n(d)}\tilde{O}_i (d)  $.
	\end{itemize}
 
	Since there are finite agents and finite objects and each agent appears in at most one round, the Labeling stage must stop in finite rounds. Suppose that it stops in $ n $ rounds. Let $ L(d)=\cup_{k=1}^{n}L_k(d) $, $ \tilde{O}(d)=\cup_{i\in L(d)} \tilde{O}_i(d) $, and $ \overline{O}(d)=O(d)\cup \tilde{O}(d) $.
	
	\item \textbf{Pointing}: Define $ I(d)=L(d)\cup \{i\in I:\sum_{o\in O}\w_{i,o}(d)>0 \} $. 
	\begin{itemize}
		\item Round 1: For each $ i\in I(d) $, if $ i $'s favorite objects among $ \overline{O}(d) $ include objects from $ O(d) $, let $ A_i(d) $ denote the set of her favorite objects from $ O(d) $. Let $ P_1(d) $ denote the set of such agents $ i $.
		
		\item Round 2: For each $ i\in I(d)\backslash  P_1(d) $, if $ i $'s favorite objects among $ \overline{O}(d)  $ include objects from $ \tilde{O}_1(d) $, let $ A_i(d) $ denote the set of her favorite objects from $ \tilde{O}_1(d) $. Let $ P_2(d) $ denote the set of such agents $ i $.
		
		\item[]$ \vdots $
		
		\item Round m: For each $ i\in I(d)\backslash [ \cup_{k=1}^{m-1} P_k(d)] $, if $ i $'s favorite objects among $ \overline{O}(d)  $ include objects from $ \tilde{O}_{m-1}(d) $, let $ A_i(d) $ denote the set of her favorite objects from $ \tilde{O}_{m-1}(d) $. Let $ P_m(d) $ denote the set of such agents $ i $.			 
	\end{itemize}
	
	Since there are finite agents and finite objects, the Pointing stage must stop in finite rounds. Note that, for each $ i\in I(d) $, $A_i(d)\cap \tilde{O}_i(d)=\emptyset$.

	\item \textbf{Trading}: Given parameters $ \lambda(d)= \big(\lambda_{i,o}(d)\big)_{i\in I(d),o\in \overline{O}(d)} $ and $ \gamma(d)= \big(\gamma_{i,o}(d)\big)_{i\in I(d),o\in A_i(d)}$,\footnote{For each $ o\in \tilde{O}(d) $ and $ i\in I(d) $, $ \lambda_{i,o}(d)\in [0,1] $, $ \sum_{i\in I(d)}\lambda_{i,o}(d)=1 $, and $ \lambda_{i,o}(d)>0 $ only if $o\in \tilde{O}_i(d)$.} let $ \mathbf{x}^*(d)=(x^*_a(d))_{a\in I(d-1)\cup \overline{O}(d-1)} $ denote the maximum solution to the equation system:	
   \begin{equation}\label{equation:system}
   	\begin{cases}
   	x_o(d)=\sum_{i\in I(d):o\in A_i(d)} \gamma_{i,o}(d)x_i(d) & \text{ for all } o\in  \overline{O}(d),\\
   	x_i(d)=\sum_{o\in \overline{O}(d)} \lambda_{i,o}(d) x_o(d) & \text{ for all }i\in I(d),\\
   	\end{cases}
   \end{equation}
   subject to the constraints
   \begin{equation}\label{constraints}
   	\begin{cases}
   	\lambda_{i,o}(d) x_o(d) \le \w_{i,o}(d) & \text{for all }  i\in I(d)  \text{ and all } o\in O(d), \\
   	\lambda_{i,o}(d) x_o(d) \le p_{i,o}(d) & \text{for all }  i\in I(d)  \text{ and all } o\in \tilde{O}(d).
   	\end{cases}
   \end{equation}

Let agents trade endowments according to $ \bx^*(d) $. Therefore, for all $ i\in I$ and all $ o\in O $,
	\[
	\w_{i,o}(d+1)=\begin{cases}
	\w_{i,o}(d)-\lambda_{i,o}(d)x_o(d) & \text{ if }i\in I(d) \text{ and }o\in O(d),\\
	0 & \text{ otherwise,}
	\end{cases}
	\]
	and
	\[
	p_{i,o}(d+1)=\begin{cases}
	p_{i,o}(d)-\lambda_{i,o}(d)x_o(d) & \text{ if }i\in I(d) \text{ and }o\in \tilde{O}_i(d),\\
	p_{i,o}(d)+\gamma_{i,o}(d)x_i(d) & \text{ if }i\in I(d) \text{ and }o\in A_i(d),\\
	p_{i,o}(d) & \text{ otherwise.}
	\end{cases}
	\]
	%For all $ i\in I\backslash I(d) $, $ \w_i(d+1)=\w_i(d) $  and $ p_i(d+1)=p_i(d) $.
	
	Let $ O(d+1)=\{o\in O(d):\sum_{i\in I} \w_{i,o}(d+1)>0\} $. If $ O(d+1) $ is empty, stop the algorithm. Otherwise, go to step $ d+1 $.
\end{enumerate}

Lemma 1 of \cite{YuzhangFTTCnew} ensures that the maximum solution $ \mathbf{x}^*(d) $ at each step exists and is non-negative. By varying the parameters $\lambda(d)$ and $\gamma(d)$, we obtain a class of BTM on the full preference domain. For instance, any BTM using the parameters in (\ref{equation:equal-BTM}) is viewed as an extension of $ \psi^E $. These extensions differ in the selection of $ (\gamma_{i,o}(d))_{i\in I(d),o\in A_i(d)} $ and $ (\lambda_{i,o}(d))_{i\in L(d),o\in \tilde{O}(d)} $. For example, we may choose  $ \gamma_{i,o}(d)=\frac{1}{|A_i(d)|} $ for each $ i\in I(d) $ and each $ o\in A_i(d) $, and $ \lambda_{i,o}(d) =\frac{1}{|\{j\in L(d):o\in \tilde{O}_j(d)\}|}$ for each $ i\in L(d) $ and each $ o\in \tilde{O}_i(d) $.

We illustrate our approach with an example. The example explains how $ \psi^E $ is extended to accommodate weak preferences using the above parameters.

\begin{example}
Consider four agents $ \{1,2,3,4\} $ and three objects $ \{a,b,c\} $. Agents own equal endowments $ (1/4a,1/4b,1/4c) $  and have the following preferences. Agent 1 is indifferent between $ a $ and $ b $, and agent 2 is indifferent between $ a $ and $ c $. We explain how $ \psi^E $ is extended to accommodate these preferences.
\begin{table}[!h]
	\centering
	\begin{tabular}{cccc}
		$ \succsim_1 $ & $ \succsim_2 $ & $ \succsim_3 $ & $ \succsim_4 $ \\\hline
		$ \{a,b\} $ & $ \{a,c\} $ & $ a $ & $ b $\\
		$ c $ & $ b $ & $ c $ & $ c $\\
		 &  & $ b $ & $ a $
	\end{tabular}
	%\caption{}\label{table:illustration}
\end{table}

\begin{itemize}
	\item Step one: All agents demand their favorite objects. Suppose that 1 and 2 equally divide their demands over their two favorite objects. Then, by solving the equations of $ \psi^E $ that require all agents to lose equal endowments in the trading process, 1 receives $ 1/4a $ and $ 1/4b $, 2 receives $ 1/4a $ and $ 1/4c $, 3 receives $ 1/2a $, and 4 receives $ 1/2b $. Each agent loses $ (1/4a,3/16b,1/16c) $ from their endowment. After this step, $ a $ is exhausted from agents' endowments.
	
	\item Step two: Since 1 is indifferent between $ a $ and $ b $, and $ b $ is still available,  1 labels her received $ 1/4a $ as available for trading. Similarly, 2 also labels her received $ 1/4a $ as available for trading. 	
	Then, 1 demands $ b $, 2 demands $ c $, 3 demands $ a $, and 4 demands $ b $. Suppose that 1 and 2 use equal amounts of $ a $ for trading. By solving the equations of $ \psi^E $, 1 receives $ 3/20b $ and loses $ 1/20a $, 2 receives $ 3/20c $ and loses $ 1/20a $, 3 receives $ 1/10a $, and 4 receives $ 1/10b $. Each agent loses $ (1/16b,3/80c) $ from their endowment. After this step, $ b $ is exhausted from agents' endowments.

	\item Step three: Since 2 is indifferent between $ a $ and $ c $, and $ c $ is still available, 2 labels her remaining $ 1/5a $ as available for trading. Then, since 1 is indifferent between $ a $ and $ b $, this induces 1 to label her received $ 2/5b $ as available for trading. Then, 1 and 3 demand $ a $, 2 demands $ c $, and 4 demands $ b $. Similarly as above, by solving the equations of $ \psi^E $, 1 receives $ 2/15a $ and loses $ 1/15b $, 2 receives $ 4/15c $ and loses $ 3/15a $, 3 receives $ 1/15a $, and 4 receives $ 1/15b $. Each agent loses $ 1/15c $ from their endowment. After this step, the $ 1/5a $ labeled by 2 is exhausted.
	
	\item Step four: No agents label their received objects as available for trading. Therefore, all agents demand the only remaining object $ c $. By solving the equations of $ \psi^E $, each agent receives $ 1/12c $, and loses $ 1/12c $ from their endowment. After this step, all objects are exhausted.
\end{itemize}

In the final allocation, 1 receives $(1/3a,1/3b,1/12c)$, 2 receives $ 3/4c $, 3 receives $(2/3a,1/12c)$, and 4 receives  $(2/3b,1/12c)$. This allocation satisfies IR, sd-efficiency, and NE. 

This example can be viewed as a house allocation problem since agents own equal divisions of objects. Therefore, this example  demonstrates how we extend PS to accommodate weak preferences. The trading process at each step can be interpreted as an eating process where agents' eating rates follow the rule introduced in Section \ref{section:house:allocation}. 
\end{example}

\section{Properties of BTM on the full preference domain}\label{section:property}

We prove that our extension of BTM to the full preference domain maintains their efficiency and fairness properties under strict preferences.

We first prove that $ \overline{O}(d)$, the set of available objects at step $ d $,  weakly shrinks as the algorithm proceeds. It means that once an object becomes unavailable at any step, it remains unavailable at subsequent steps. This fact is crucial for the properties of BTM.  

\begin{lemma}\label{lemma1}
	For any step $ d $ of BTM on the full preference domain, $ \overline{O}(d+1)\subseteq \overline{O}(d) $.
\end{lemma}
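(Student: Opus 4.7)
The plan is to split $\overline{O}(d) = O(d) \cup \tilde{O}(d)$. The inclusion $O(d) \subseteq O(d-1) \subseteq \overline{O}(d-1)$ is immediate from the update rule defining $O(d)$ as the subset of $O(d-1)$ with strictly positive total remaining endowment, so all the work is to prove $\tilde{O}(d) \subseteq \overline{O}(d-1)$. I would do this by induction on the round index $k$ of the labeling stage at step $d+1$, establishing $\tilde{O}_k(d) \subseteq \overline{O}(d-1)$ for every $k$.

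For the inductive step, fix $o^{*} \in \tilde{O}_i(d)$ with $i \in L_k(d)$; by construction there is some $o \in O(d) \cup \tilde{O}_1(d) \cup \cdots \cup \tilde{O}_{k-1}(d)$ with $o \sim_i o^{*}$, $p_{i,o^{*}}(d) > 0$, and $o^{*}$ outside the prior available set at step $d+1$. Suppose for contradiction that $o^{*} \notin \overline{O}(d-1)$. The key observation is that the step-$d$ trading cannot touch $p_{i,o^{*}}$ in this case, because $A_i(d) \subseteq \overline{O}(d-1)$ and $\tilde{O}_i(d-1) \subseteq \tilde{O}(d-1) \subseteq \overline{O}(d-1)$ both fail to contain $o^{*}$; hence the ``otherwise'' branch of the update rule applies and $p_{i,o^{*}}(d) = p_{i,o^{*}}(d-1) > 0$. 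Combining the inductive hypothesis with $O(d) \subseteq O(d-1)$, the witness $o$ itself already lies in $\overline{O}(d-1) = O(d-1) \cup \tilde{O}(d-1)$. I would then replay step-$d$'s labeling: if $o \in O(d-1)$, the triple $(i, o, o^{*})$ satisfies the round-1 labeling condition at step $d$, so $o^{*} \in \tilde{O}_i(d-1) \subseteq \tilde{O}(d-1)$, contradicting $o^{*} \notin \overline{O}(d-1)$; if instead $o \in \tilde{O}_j(d-1)$ for some $j$, the same triple satisfies the round-$(j{+}1)$ labeling condition at step $d$, yielding the same contradiction.

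The main obstacle I anticipate is the side condition that each agent appears in at most one round of the labeling stage: to run the argument in the second sub-case via agent $i$, one needs $i \notin L_1(d-1) \cup \cdots \cup L_j(d-1)$. If $i$ itself was already used at some round $m \le j$ of step $d$'s labeling, I would have to show separately that $o^{*}$ was already captured there, leveraging the fact that the labeling rule at round $m$ pulls in every $p$-positive object lying in an $i$-indifference class that meets the round-$m$ available set, so the only way $o^{*}$ could have escaped is if $[o^{*}]_i$ failed to touch the round-$m$ available set. In that residual case I would trace back the chain of intermediate labelers that places $o$ into $\tilde{O}_j(d-1)$ and argue that this same chain, if not already witnessing $o^{*} \in \tilde{O}(d-1)$, would have prevented $o$ from entering $\tilde{O}_{k-1}(d)$ at step $d+1$ in the first place. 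This bookkeeping on indifference classes and chain depths is the delicate piece; everything else follows directly from the update rules.
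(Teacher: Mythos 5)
Your proposal is correct and takes essentially the same route as the paper's proof: the same split $\overline{O}(d)=O(d)\cup\tilde{O}(d)$ with $O(d)\subseteq O(d-1)$ handled trivially, the same induction on the rounds of the labeling stage at step $d+1$, and the same core move of replaying step $d$'s labeling to conclude that the witness object would already have been labeled --- and you are in fact slightly more careful than the paper, which silently uses $p_{i,o'}(d-1)>0$ where you derive it from the update rule via $A_i(d)\subseteq\overline{O}(d-1)$ and $\tilde{O}_i(d-1)\subseteq\overline{O}(d-1)$. The ``delicate piece'' you flag in your last paragraph (the once-per-agent exclusion $i\notin L_1(d-1)\cup\cdots\cup L_j(d-1)$ in later labeling rounds) is a real subtlety of the algorithm's literal text, but the paper's proof does not address it at all --- it simply asserts that ``$i$'s consumption of $o'$ must be labeled as available for trading at step $d$,'' implicitly treating the labeling stage as a closure operation that labels every positive consumption indifferent to an available object --- so your first two paragraphs already reproduce the paper's full argument, and your unexecuted chain-tracing patch concerns a gap the paper shares rather than a step your proof is missing relative to it.
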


To facilitate our discussion, at each step $d$ of BTM, we define 	$ x^c_i(d)=\sum_{o\in \tilde{O}(d)} \lambda_{i,o}(d) x_o(d) $, and $ x^n_i(d)=\sum_{o\in O(d)} \lambda_{i,o}(d) x_o(d) $. 
That is, $ x^c_i(d) $ is the amount of objects $ i $ loses from her labeled objects at step $ d $; $ x^n_i(d) $ is the amount of objects $i$ loses from her endowments. We have $ x_i(d)= x^c_i(d)+x^n_i(d)$, and $ x^n_i(d) $ equals $ i $'s net consumption at step $ d $.

\begin{proposition}\label{section5:prop1}
	Every BTM on the full preference domain is IR and sd-efficient.
\end{proposition}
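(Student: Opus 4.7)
The plan is to prove individual rationality (IR) and sd-efficiency separately, with both arguments closely paralleling the strict preference case of \cite{YuzhangFTTC} but critically invoking Lemma~\ref{lemma1} to handle the new relabeling machinery. For IR, I proceed by induction on the step $d$, establishing the invariant
\[
\sum_{o' \succsim_i o} p_{i,o'}(d) \;\geq\; \sum_{o' \succsim_i o} \bigl[\w_{i,o'} - \w_{i,o'}(d)\bigr]
\]
for every agent $i$ and every object $o$. Since all endowments are exhausted at termination, the invariant then delivers $p_i \succsim^{sd}_i \w_i$. At step $d$, the left-hand side increases by
\[
\sum_{\substack{o' \succsim_i o\\ o' \in A_i(d)}} \gamma_{i,o'}(d)\, x_i(d) \;-\; \sum_{\substack{o' \succsim_i o\\ o' \in \tilde{O}_i(d-1)}} \lambda_{i,o'}(d)\, x_{o'}(d),
\]
while the right-hand side increases by $\sum_{o' \succsim_i o,\, o' \in O(d-1)} \lambda_{i,o'}(d)\, x_{o'}(d)$. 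The Pointing stage guarantees that every $o^\star \in A_i(d)$ satisfies $o^\star \succsim_i o'$ for all $o' \in \overline{O}(d-1)$; combined with the balanced-trade identity $x_i(d) = \sum_{o' \in \overline{O}(d-1)} \lambda_{i,o'}(d)\, x_{o'}(d)$ and $\sum_{o' \in A_i(d)} \gamma_{i,o'}(d) = 1$, this forces the full gain $x_i(d)$ to land inside $\{o' : o' \succsim_i o\}$ whenever any loss accrues there, so the per-step inequality holds.

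For sd-efficiency I would invoke the characterization of \cite{bogomolnaia2001new}: $p$ fails sd-efficiency iff there exist agents and objects $(i_1,o_1),\ldots,(i_k,o_k)$ with $p_{i_j,o_j} > 0$ and $o_{j+1} \succ_{i_j} o_j$ (indices modulo $k$). Assume such a cycle exists and set $\tau(o) := \max\{d \geq 1 : o \in \overline{O}(d-1)\}$, which is well-defined by Lemma~\ref{lemma1}. Because $p_{i_j,o_j} > 0$ and $p_{i_j, o_j}$ is only increased at steps where $o_j \in A_{i_j}$, there is a step $d_j$ with $o_j \in A_{i_j}(d_j)$, and thus $d_j \leq \tau(o_j)$. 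At that step the favorite-pointing property yields $o_j \succsim_{i_j} o'$ for every $o' \in \overline{O}(d_j-1)$; combined with $o_{j+1} \succ_{i_j} o_j$, this forces $o_{j+1} \notin \overline{O}(d_j-1)$, so $\tau(o_{j+1}) \leq d_j - 1 < \tau(o_j)$. Traversing the cycle yields $\tau(o_1) > \tau(o_2) > \cdots > \tau(o_k) > \tau(o_1)$, the desired contradiction.

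The main difficulty lies in the relabeling. In the IR step one must treat the labeled-consumption loss on the left of the invariant correctly: it is exactly compensated inside the upper contour set precisely because labeling only fires between indifferent objects, and the balanced trade makes the compensation equal to $x_i(d)$. In the sd-efficiency step the subtle point is extracting, from $p_{i_j,o_j} > 0$ at termination, a step $d_j$ where $o_j \in A_{i_j}(d_j)$: labeling can subsequently erase past gains, so one must pick the \emph{last} step at which mass is added to $p_{i_j,o_j}$; finite termination guarantees such a step exists, and the inclusion $A_{i_j}(d_j) \subseteq \overline{O}(d_j-1)$ immediately delivers $d_j \leq \tau(o_j)$ to close the cycle argument.
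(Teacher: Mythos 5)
Your IR argument is sound: it is a fleshed-out version of the paper's one-line observation that balanced trade makes each agent's net consumption stochastically dominate the endowments he loses, and your key step — that every $o^\star\in A_i(d)$ is a favorite of $i$ over all of $\overline{O}(d-1)$, so the gain $x_i(d)$ lands weakly above any losses — is correct, since each round of the Pointing stage has agents point only at objects that are $\succsim_i$-maximal in $\overline{O}(d-1)$.

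The sd-efficiency half, however, has a genuine gap. The characterization you invoke — $p$ fails sd-efficiency \emph{iff} there is a cycle with $p_{i_j,o_j}>0$ and $o_{j+1}\succ_{i_j}o_j$ at \emph{every} link — is a strict-preference-domain result of \cite{bogomolnaia2001new}; it is false on the full domain, and the "only if" direction is exactly what you need. The paper's own introductory example refutes it: with $i$ indifferent between $a,b$ and $j$ strictly preferring $a$ to $b$, the assignment where each keeps $(1/2a,1/2b)$ is strictly sd-dominated (give $b$ to $i$, $a$ to $j$), yet it admits no all-strict cycle because $i$ has no strict comparisons at all. The correct starting point, used in the paper, is a weak improvement cycle in which every link is weak and at least one is strict. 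Your decreasing-$\tau$ argument then breaks on indifference links: from $o_{j+1}\sim_{i_j}o_j$ and favorite-pointing at $d_j$ you cannot conclude $o_{j+1}\notin\overline{O}(d_j-1)$, so you get no decrease in $\tau$ there. The missing ingredient is precisely the labeling rule, which your proof never invokes beyond Lemma~\ref{lemma1}: if $o_{j+1}$ were still in $\overline{O}(d-1)$ while $o_j$ is not and $i_j$ still holds mass of $o_j$ (which persists, since $o_j$ can no longer be gained after it leaves $\overline{O}$), the Labeling stage would have marked $i_j$'s consumption of $o_j$ as available, contradicting $o_j\notin\overline{O}(d-1)$; this yields the weak inequality $\tau(o_{j+1})\le\tau(o_j)$ on indifference links, which combined with the strict decrease on the at-least-one strict link closes the cycle argument — this is the paper's two-case induction. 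Symptomatically, as written your argument would also "certify" sd-efficiency of FTTC run after arbitrary tie-breaking with no labeling stage at all, which the paper's example shows is inefficient, so the proof as proposed cannot be correct.
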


\cite{YuzhangFTTCnew} show that by controlling the equation parameters in the procedure of BTM, the resulting allocation can achieve fairness axioms defined in Section \ref{section:FEEmodel}.  We show that these results still hold on the full preference domain.

\begin{definition}
	A BTM  satisfies
	\begin{itemize}
		\item[(1)]  \textbf{stepwise equal treatment of equals} (stepwise ETE) if, at each step $ d $, for each $i,j\in I(d)$, 
	$[ \w_i(d)=\w_j(d), \tilde{O}_i(d)=\tilde{O}_j(d), \text{and }A_i(d)=A_j(d)]\implies$ $\lambda_{i,o}(d)=\lambda_{j,o}(d) $ for all $ o\in \overline{O}(d) $.
		
		\item[(2)]  \textbf{stepwise equal-endowment equal treatment} (stepwise EEET) if, at each step $ d $, for each $i,j\in I(d)$, 	
		$ \w_i(d)=\w_j(d)  \implies $ $\lambda_{i,o}(d)=\lambda_{j,o}(d) $ for all $ o\in O(d) $.
		
		\item[(3)] \textbf{bounded advantage} if, at each step $ d $, for each $i,j\in I(d)$ and each $o\in O(d)$, 	
		$ \w_{i,o}(d)\ge \w_{j,o}(d)>0 \implies \frac{\w_{i,o}(d)}{\w_{j,o}(d)} \ge \frac{\lambda_{i,o}(d)}{\lambda_{j,o}(d)} \ge 1 $.
	\end{itemize}
\end{definition}

 At any step $ d $, for any $i,j\in I(d)$, if $\lambda_{i,o}(d)=\lambda_{j,o}(d) $ for all $ o\in \overline{O}(d) $, then $ x^c_i(d)=x^c_j(d) $ and $ x^n_i(d)=x^n_j(d) $. Stepwise ETE requires that, at the beginning of any step $ d $, as long as two agents have equal remaining endowments, label the same set of available objects, and demand the same set of favorite objects , they are treated equally at the step. Stepwise EEET imposes a weaker requirement on a larger set of agents. For any two agents who have equal remaining endowments at the beginning of any step $ d $, by requiring $\lambda_{i,o}(d)=\lambda_{j,o}(d) $ for all $ o\in O(d) $, it ensures that $ x^n_i(d)=x^n_j(d) $ and $ \w_i(d+1)=\w_j(d+1) $.
 
 Bounded advantage generalizes stepwise EEET. At each step $ d $, if $ \w_{i,o}(d)\ge \w_{j,o}(d) >0$, then $\frac{\lambda_{i,o}(d)}{\lambda_{j,o}(d)} \ge 1 $ implies $ \lambda_{i,o}(d)x_o(d)-\lambda_{j,o}(d)x_o(d) \ge 0 $, which means that $ i $ uses weakly more of $ o $ than $ j $ does for trading at step $ d $; while $ \frac{\w_{i,o}(d)}{\w_{j,o}(d)} \ge \frac{\lambda_{i,o}(d)}{\lambda_{j,o}(d)} $ implies $ \lambda_{i,o}(d)x_o(d)-\lambda_{j,o}(d)x_o(d) \le  \w_{i,o}(d)-\w_{j,o}(d)$,\footnote{$ \lambda_{i,o}(d)x_o(d)-\lambda_{j,o}(d)x_o(d)= \lambda_{j,o}(d)x_o(d)\big(\frac{\lambda_{i,o}(d)}{\lambda_{j,o}(d)}-1\big) \le \w_{j,o}(d) \big(\frac{\w_{i,o}(d)}{\w_{j,o}(d)}-1\big)= \w_{i,o}(d)-\w_{j,o}(d)$.} which means that $ i $'s relative advantage in using $ o $ for trading is bounded by her relative advantage in the endowment of $ o $.

\begin{proposition}\label{prop:generalfairness}
	(1) A BTM satisfying stepwise ETE satisfies ETE;
	
	(2) A BTM satisfying stepwise EEET satisfies EENE;
	
	(3) A BTM satisfying bounded advantage satisfies BE.
\end{proposition}

Any extension of $ \psi^E $ is an example of BTM that satisfies bounded advantage.

Our definition of BTM is very flexible, allowing mechanism designers to choose equation parameters contingent on various factors. When preferences are strict, \cite{YuzhangFTTCnew} define a subclass of BTM, called \textbf{regular} BTM, where the parameters at each step depend only on the distribution of agents' endowments. Every regular BTM is anonymous, neutral, and satisfies EENE and bounded invariance. This definition extends to the full preference domain. For the new parameter $ \gamma_{i,o}(d) $, we can similarly require that its selection depends only on the distribution of available objects at the step. Then, the desirable properties of regular BTM are preserved. We omit the details since they are straightforward.

\cite{YuzhangFTTCnew} prove more precise fairness properties of $ \psi^E $ in simple economies where each agent owns at most one object. We prove that these properties are preserved for the extensions of $ \psi^E $. Formally, an economy $ (I,O,\succsim_I, \w) $ is called \textbf{simple} if, for each $ i\in I $, there exists at most one $ o\in O $ such that $ \w_{i,o}>0 $. In a simple economy, an allocation $ p $ satisfies \textbf{ordinal fairness} if, for each object $ o $ and any two owners $ i $ and $ j $ with $ 0<\w_{i,o}\le \w_{j,o} $:

(1) $ p_{i,a}>0\implies\sum_{o'\succsim_i a}p_{i,o'}\le  \sum_{o'\succsim_j a}p_{j,o'} $, and

(2) $ p_{j,a}>0\implies\sum_{o'\succsim_j a}p_{j,o'}\le  \sum_{o'\succsim_i a}p_{i,o'} $, unless $ \sum_{o'\succsim_i a}p_{i,o'}= \norm{p_i}<\sum_{o'\succsim_j a}p_{j,o'}$.

Ordinal fairness implies \textbf{generalized EENE}: for each object $ o $ and its any two owners $ i $ and $ j $ with $ 0<\w_{i,o}\le \w_{j,o} $, $ p_j \succsim^{sd}_j p_i $, and for all  $ p'_j \in 2^{p_j} $ with $ \norm{p'_j}=\norm{p_i} $, $ p_i \succsim^{sd}_i p'_j $.

\begin{proposition}\label{prop:ordinalfairness}
	In simple FEE economies, every extension of $ \psi^E $ satisfies ordinal fairness.
\end{proposition}

\section{House allocation model}\label{section:house:allocation}

In the house allocation model with strict preferences, every BTM is equivalent to a SEA introduced by \cite{bogomolnaia2001new}, and every BTM satisfying stepwise EEET is equivalent to PS. Proposition \ref{prop:generalfairness} implies that every BTM satisfying stepwise EEET defined in this paper is an extension of PS that maintains its efficiency and fairness properties. 

\begin{corollary}
	In the house allocation model, every BTM satisfying stepwise EEET is an extension of PS to the full preference domain that satisfies sd-efficiency and NE.
\end{corollary}

In the house allocation model, every BTM satisfying stepwise EEET can be described as a SEA: at any time $ t\in [0,1] $, agents label their consumption of exhausted objects as available for others to consume if they find indifferent available objects in the economy; then, agents consume their favorite available objects with the following rates:
\[
s_i(t)=1+\sum_{o\in \tilde{O}_i(t)}\big[\lambda_{i,o}(t)\sum_{j\in I:o\in A_j(t)}\gamma_{j,o}(t)s_j(t)\big].
\]
In words, in addition to the unit rate, every agent's eating rate is instantly increased by an amount equal to the total rate at which her labeled consumption is being consumed by the others. We call this rate-adjusting rule ``\textit{you request my house - I get your rate}''.

\subsection{Dichotomous preferences}\label{section:house_allocation_dich}

The dichotomous preference domain is an extreme case of weak preferences. Under dichotomous preferences, each agent's welfare in an allocation is measured by the total amount of acceptable objects she receives. \cite{bogomolnaia2004random} define the \textit{egalitarian solution} to achieve efficiency and maximal fairness simultaneously.\footnote{\cite{bogomolnaia2004random} also prove that any mechanism selecting the egalitarian solution is group strategy-proof, meaning that no group of agents can jointly manipulate the mechanism.} \citet{KS2006} use tools from network flow theory to generate assignments in the egalitarian solution. In this subsection, we prove that in the special case of dichotomous preferences, every BTM satisfying stepwise EEET is an algorithm that finds the egalitarian solution.

Formally, we consider a house allocation allocation model in which each $ i\in I $ classifies the objects in $ O $ into the set of acceptable objects (denoted by $ C_i $) and the set of unacceptable objects, and is indifferent between the objects in the same class. Each object is acceptable to at least one agent. The Gallai-Edmonds Decomposition Lemma of \cite{bogomolnaia2004random} states that every such problem can be decomposed into three subproblems. In the first, a perfect match exists where each agent obtains an acceptable object. In the second, there is an oversupply of acceptable objects for any subset of agents, so that every agent can also obtain an acceptable object. In the third, there is a shortage of acceptable objects for any subset of agents. The egalitarian solution is proposed for the third subproblem. We restrict attention to the third subproblem by assuming that, for every nonempty $ O'\subseteq O $, $|\{i\in I:C_i\cap O'\neq \emptyset\}| >|O'| $. 

For every nonempty $ Y\subseteq I $ and nonempty $ O'\subseteq O $, define $\Gamma(Y, O')=\big(\cup_{i\in Y}C_i\big)\cap O'$, which is the set of objects among $ O' $ that are acceptable to at least one agent in $ Y $.

The egalitarian solution is characterized by a sequence of bottleneck sets of agents with their assigned objects.  The first bottleneck set $ X^*_1 $ satisfies
\begin{equation}\label{bottleneck}
X^*_1=\arg\min_{Y\subseteq I}\frac{|\Gamma(Y,O)|}{|Y|}.
\end{equation}
When there are multiple solutions to the above problem, let $ X^*_1 $ be the solution of the largest cardinality.\footnote{Since the union of two solutions to the problem is still a solution, $ X^*_1 $ is unique.} In the egalitarian solution, $ \Gamma(X^*_1, O) $ are assigned to $ X^*_1 $ such that each $ i\in X^*_1 $ obtains $ \frac{|\Gamma(X^*_1,O)|}{|X^*_1|} $ of acceptable objects. 

The second bottleneck set is found similarly among the remaining agents $ Z_1= I\backslash X^*_1 $ and the remaining objects $ P_1=O\backslash \Gamma(X^*_1,O) $. 
Inductively, for each $ k\ge 2 $, the $ k $-th bottleneck set $ X^*_k $ satisfies
\begin{equation*}
X^*_k=\arg\min_{Y\subseteq Z_{k-1}}\frac{|\Gamma(Y,P_{k-1})|}{|Y|},
\end{equation*}
where $ Z_{k-1}=I\backslash (\cup_{\ell=1}^{k-1} X^*_\ell) $ and $P_{k-1}=P_{k-2}\backslash \Gamma(X^*_{k-1},P_{k-2})$.
When there are multiple solutions, let $ X^*_k $ be the solution of the largest cardinality. In the egalitarian solution, $ \Gamma(X^*_k,P_{k-1}) $ are assigned to $ X^*_k $ such that each $ i\in X^*_k $ obtains $ \frac{|\Gamma(X^*_k,P_{k-1})|}{|X^*_k|} $ of acceptable objects.

Let $ (X^*_1, X^*_2,\ldots,X^*_m) $ denote the sequence of bottleneck sets in the egalitarian solution.

We prove that the sequence of bottleneck sets defined above is generated in the procedure of every BTM satisfying stepwise EEET. Specifically, by Lemma \ref{lemma1}, $ \overline{O}(d+1)\subseteq \overline{O}(d) $ for each step $ d $. Now, we call a step $d$ \textbf{critical} if $\overline{O}(d)\backslash \overline{O}(d+1)\neq \emptyset $. That is, some objects available at the beginning of step $d$ become unavailable at the end of step $d$. Let $ (d_1,d_2,\ldots,d_n) $ denote the sequence of critical steps in the procedure of our mechanism. For each critical step $ d_k $, define $ X_{k}= \{i\in I: p_{i,o}(d_{k}+1)>0 \text{ for some } o\in \overline{O}(d_{k})\backslash \overline{O}(d_{k}+1) \}$. That is, $ X_k $ is the set of agents who receive positive amounts of some objects from $ \overline{O}(d_{k})\backslash \overline{O}(d_{k}+1) $. 
Lemma \ref{lemma:bottleneck} proves that $ X_k=X^*_k $. That is, $ X_k $ is a bottleneck set of agents defined above. 

\begin{lemma}\label{lemma:bottleneck}
	In every house allocation problem with dichotomous preferences, the number of critical steps in the procedure of any BTM satisfying EEET equals the number of bottleneck sets of agents defined by \cite{bogomolnaia2004random}, and for every critical step $ d_k $,
	\begin{center}
		$ X_k=X^*_k $, and $ \overline{O}(d_k)\backslash \overline{O}(d_k+1)=\Gamma(X^*_k,P_{k-1}) $
	\end{center}
 where $ P_{0}=O $ and $ P_{k}=P_{k-1}\backslash \Gamma(X^*_{k},P_{k-1}) $.
\end{lemma}

Therefore, we obtain the following result. 

\begin{proposition}\label{prop:dich}
	In every house allocation problem with dichotomous preferences, every BTM satisfying stepwise EEET finds the egalitarian solution of \cite{bogomolnaia2004random}.
\end{proposition}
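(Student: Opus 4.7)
The plan is to leverage Lemma \ref{lemma:bottleneck} almost directly. That lemma already identifies the sets $\overline{O}(d_k-1)\setminus\overline{O}(d_k)$ in the FTTC run with the sets $\Gamma(X^*_k,P^*_{k-1})$ in Bogomolnaia and Moulin's construction, and in the course of its proof it shows that the agents in the set $X^*_k=X_k=\{i:p_{i,o}(d_k)>0 \text{ for some } o\in \overline{O}(d_k-1)\setminus\overline{O}(d_k)\}$ are precisely the $k$-th bottleneck set. So the structural decomposition on both sides already coincides, and what remains is to verify that the amounts of acceptable objects assigned match the egalitarian numbers $t_k=|\Gamma(X^*_k,P^*_{k-1})|/|X^*_k|$.

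For the quantitative matching, I would argue as follows. Under stepwise EEET, at every step $d$ with $d_{k-1}<d\le d_k$, every agent in $X^*_k$ faces the same acceptable-available set (since these agents point only to objects in $\overline{O}(d_k-1)\setminus\overline{O}(d_k)\subseteq P^*_{k-1}$, none of which is exhausted until step $d_k$), so by the argument already used in the base case of Lemma \ref{lemma:bottleneck}, these agents receive equal total net consumption of acceptable objects by the end of step $d_k$. The total mass of acceptable objects absorbed by $X^*_k$ across steps $d_{k-1}+1,\dots,d_k$ equals $|\Gamma(X^*_k,P^*_{k-1})|$, since every object in this set is fully used up in that block and, by definition of $X^*_k$, cannot be consumed by any agent outside $X^*_k$ (an outside agent obtaining such an object would have to consume it earlier, contradicting $o\in \overline{O}(d_k-1)$, or would merge with $X^*_k$, contradicting maximality). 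Dividing this mass evenly among the $|X^*_k|$ agents gives exactly $t_k$.

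The final step is to note that an agent in $X^*_k$ receives acceptable objects only during the blocks indexed by $1,\dots,k$: before block $k$ the agent might consume some acceptable objects from $P^*_{k-1}$ through labeling-chain exchanges, but all such consumption is ultimately relabeled and canceled against indifferent acceptable objects, so the \emph{net} acceptable mass received up to the end of block $k$ is $t_k$, and the agent receives nothing acceptable afterwards because $C_i\cap \overline{O}(d_k)=\emptyset$ (otherwise the agent would have labeled and remained active). Hence each $i\in X^*_k$ obtains a lottery whose total probability on $C_i$ equals $t_k$, which is the egalitarian welfare. Since on the dichotomous domain each agent's welfare is a function of this single number, the FTTC assignment and the egalitarian solution induce the same welfare vector; combined with sd-efficiency (Proposition \ref{section5:prop1}), this is enough to conclude.

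The main obstacle I anticipate is the middle step — carefully separating the gross consumption of objects in $P^*_{k-1}$ by $X^*_k$-agents from the cancellations produced by the labeling mechanism, so that one can legitimately divide $|\Gamma(X^*_k,P^*_{k-1})|$ evenly by $|X^*_k|$. This requires invoking the balanced-trade identity $x_i(d)=x^c_i(d)+x^n_i(d)$ to ensure the bookkeeping cleanly attributes net mass $t_k$ to each agent in $X^*_k$; everything else is essentially a corollary of Lemma \ref{lemma:bottleneck}.
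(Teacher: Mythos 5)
Your proposal is correct and takes essentially the same route as the paper: the paper presents this proposition as an immediate consequence of Lemma \ref{lemma:bottleneck}, whose proof already contains the quantitative step you reconstruct (stepwise EEET forces equal net consumptions at every step, so each agent in $X^*_k$ holds exactly $t_k = |\Gamma(X^*_k,P^*_{k-1})|/|X^*_k|$ of acceptable mass once step $d_k$ ends, and nothing acceptable thereafter since $C_i\cap \overline{O}(d_k)=\emptyset$). Your two small imprecisions --- attributing all of $|\Gamma(X^*_k,P^*_{k-1})|$ to the block of steps $d_{k-1}+1,\dots,d_k$, when agents in $X^*_k$ in fact consume these objects from step $1$ onward, and the unneeded claim that all agents in $X^*_k$ face the same acceptable-available set --- wash out because only the cumulative net acceptable mass at the end of step $d_k$ matters, and that is exactly what the lemma's equal-net-consumption argument pins down.
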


\section{Conclusion}\label{section:conclusion}

Our approach to dealing with weak preferences can be applied to a broad range of problems involving trading mechanisms to find desirable allocations. The key idea is to utilize a trading process that enables agents to exchange consumption while preserving efficiency. Depending on the application, this process may vary in complexity. However, the method developed by \cite{YuzhangFTTCnew} allows us to define the trading process using linear equations, avoiding potential complications. We hope to see further applications of our method in future research.

\bibliographystyle{aea}

\setlength{\bibsep}{0pt plus 0.3ex}
\bibliography{reference}

\clearpage

\appendix

\section*{Appendix: Proofs}

\begin{proof}[\normalfont \textbf{Proof of Lemma \ref{lemma1}}]
	By definition, $ \overline{O}(d)=O(d)\cup \tilde{O}(d) $. Since $ O(d+1)\subseteq O(d)\subseteq \overline{O}(d) $, to prove that $ \overline{O}(d+1)\subseteq \overline{O}(d) $, it suffices to prove $ \tilde{O}(d+1)\subseteq \overline{O}(d) $. We enumerate the objects in $ \tilde{O}(d+1) $. Define $ E(d)= O(d)\backslash O(d+1) $, which is the set of objects exhausted from agents' endowments at step $ d $. Suppose there are $ n $ rounds in the labeling stage of step $ d+1 $. Then, $ \tilde{O}(d+1)=\cup_{k=1}^{n}\tilde{O}_k(d+1) $. We then prove by induction that $ \tilde{O}(d+1)\subseteq \overline{O}(d) $.
	
	\textit{Base step.} For each $ o'\in \tilde{O}_1(d+1) $, if $ o'\in E(d) $, then $ o'\in O(d) \subseteq \overline{O}(d) $. If $ o'\notin E(d) $, then $ o'\notin O(d) $. So $ o' $ is exhausted before step $ d $. Then, $ o'\in \tilde{O}_1(d+1) $ means that some agent $ i $ is indifferent between $ o' $ and some $ o\in O(d+1) $. If $ i $ receives a positive amount of $ o $ at the end of step $ d-1 $, then, since $ o\in O(d+1)\subseteq O(d) $, $ i $ must label her received amount of $ o' $ as available at step $ d $. If $ i $ does not receive a positive amount of $ o' $ at the end of step $ d-1 $, then $ i $ must receive a positive amount of $ o' $ during step $ d $. Since $ o' $ is exhausted before step $ d $, some agent must label her received amount of $ o' $ as available at step $ d $. Therefore, in any case, $ o'\in \overline{O}(d) $. Thus,  $ \tilde{O}_1(d+1)\subseteq \overline{O}(d) $. 
	
	\textit{Induction step.} Suppose that, for all  $ \ell=1,\ldots,k-1 $, $ \tilde{O}_\ell(d+1)\subseteq \overline{O}(d) $. We want to prove that $ \tilde{O}_k(d+1)\subseteq \overline{O}(d) $. For each $ o'\in \tilde{O}_k(d+1) $, if $ o'\in E(d) $, then $ o'\in O(d) \subseteq \overline{O}(d) $. If $ o'\notin E(d) $, then $ o'\notin O(d) $. So $ o' $ is exhausted before step $ d $. Then, $ o'\in \tilde{O}_k(d+1) $ means that some agent $ i $ is indifferent between $ o' $ and some $ o\in \tilde{O}_{k-1}(d+1) $. If $ i $ receives a positive amount of $ o' $ at the end of step $ d-1 $, then, since $ o\in \tilde{O}_{k-1}(d+1)\subseteq O(d) $, $ i $ must label her received amount of $ o' $ as available at step $ d $. If $ i $ does not receive a positive amount of $ o' $ at the end of step $ d-1 $, then $ i $ must receive a positive amount of $ o' $ during step $ d $. Since $ o' $ is exhausted before step $ d $, some agent must label her received amount of $ o' $ as available at step $ d $. So, in any case, $ o'\in \overline{O}(d) $. Thus, $ \tilde{O}_k(d+1)\subseteq \overline{O}(d) $. 
	
	By induction, for all $ 1\le \ell \le n $, $ \tilde{O}_\ell(d+1)\subseteq \overline{O}(d) $. So, $ \tilde{O}(d+1)\subseteq \overline{O}(d) $.
\end{proof}

\begin{proof}[\normalfont \textbf{Proof of Proposition \ref{section5:prop1}}]
	(IR) At each step $ d $ of BTM, each agent $ i $ receives a net amount $ x_i^n(d) $ of her favorite object among available objects and loses an equal amount of endowments. Therefore, at each step, the net assignment each agent receives weakly stochastically dominates the endowments she loses. In aggregation, the final assignment received by each agent weakly stochastically dominates her endowments.

	(Sd-efficiency) Suppose that for some preference profile, the allocation found by some BTM is not sd-efficient. Then, there must exist $ k\geq 2 $ agents, who are denoted by $ i_1, i_2, \ldots, i_k $ and do not need to be distinct, and $ k $ objects in the assignments they receive, denoted by $ o_1 , o_2,\ldots, o_k $, such that if the $ k $ agents exchange a positive amount of the $ k $ objects as represented by the following cycle, none of them is worse off and some is strictly better off:
	$$i_{1}\rightarrow  o_2\rightarrow i_{2}\rightarrow  o_3\rightarrow i_{3}\rightarrow \cdots \rightarrow  o_k\rightarrow i_{k}\rightarrow  o_1\rightarrow i_{1}.$$
	By trading objects following the cycle, every $ i_\ell $ obtains an amount of $ o_{\ell+1} $ from $i_{\ell+1}$'s consumption and loses an equal amount of $o_\ell$ from her own consumption to $i_{\ell-1}$. Without loss of generality, assume that $ i_1 $ is strictly better off after the trade. This means that $ i_1 $ strictly prefers $ o_2 $ to $ o_1 $. Suppose that in the procedure of BTM, $ i_1 $ starts demanding $o_1 $ at step $ d $. Then, it must be $ o_1\in \overline{O}(d) $ and $ o_2\notin \overline{O}(d) $. We then consider $ i_2 $. Assume that $ i_2 $ starts demanding $ o_2 $ at step $ d' $. There are two cases:
	
	\begin{itemize}
		\item  If $ i_2 $ strictly prefers $ o_3 $ to $ o_2 $, then it must be $ o_2\in \overline{O}(d') $ and $ o_3\notin \overline{O}(d') $. Since $ o_2\notin \overline{O}(d) $, by Lemma 1, it must be that $ d'<d $. So, $ o_3\notin \overline{O}(d) $.
		
		\item If $ i_2 $ is indifferent between $ o_3 $ and $ o_2 $, since $ o_2\notin \overline{O}(d) $, it must be $ o_3\notin \overline{O}(d) $, since otherwise $ i $ should label her received amount of $ o_2 $ as available.
	\end{itemize}
	
	In any case, we have $ o_3\notin \overline{O}(d) $. We can inductively apply the above arguments to the remaining agents and objects in the cycle, which leads to the conclusion that $ o_1\notin \overline{O}(d) $. This is a contradiction.
\end{proof}

\begin{proof}[\normalfont \textbf{Proof of Proposition \ref{prop:generalfairness}}]
	
	(1) For any two agents $ i $ and $ j $ such that $ \w_i=\w_j $ and $ \succsim_i=\succsim_j $, at step one of BTM, they must demand the same set of objects. By stepwise ETE, $\lambda_{i,o}(1)=\lambda_{j,o}(1) $ for all $ o\in \overline{O}(1) $, implying that they obtain equal amounts of each object and lose equal amounts of each endowment at step one. At the beginning of step two, they must still have equal remaining endowments, label the same set of available objects, and demand the same set of favorite objects. By stepwise ETE, $\lambda_{i,o}(2)=\lambda_{j,o}(2) $ for all $ o\in \overline{O}(2) $, implying that they obtain equal amounts of each object and lose equal amounts of each endowment at step two. This inductively holds for all remaining steps. So, in the final allocation, they must receive equal assignments.
	
	(2) For any two agents $ i $ and $ j $ with $\w_i=\w_j$, at step one, by stepwise EEET, $\lambda_{i,o}(1)=\lambda_{j,o}(1) $ for all $ o\in O(1) $. It ensures that $ x^n_i(1)=x^n_j(1) $ and $ \w_i(2)=\w_j(2) $. At step two, by stepwise EEET, we still have  $\lambda_{i,o}(2)=\lambda_{j,o}(2) $ for all $ o\in O(2) $, ensuring that $ x^n_i(2)=x^n_j(2) $ and $ \w_i(3)=\w_j(3) $. This inductively holds for all remaining steps. So, at each step, the two agents obtain equal net amounts of their respective favorite objects. It implies that they do not envy each other in the final allocation.
	
	(3) In any economy, let $ p $ be the outcome of any BTM satisfying bounded advantage. 
	Suppose that an agent $ i $ envies another $ j $. Let $ o^* $ be the solution to $ \max_{o\in O} \big[\sum_{o'\succsim_i o} p_{j,o'}-\sum_{o'\succsim_i o}p_{i,o'}\big] $.
	Let $ d $ be the earliest step after which all objects in $ U(\succsim_i, o^*) $ become unavailable. That is, $ U(\succsim_i, o^*)\cap \overline{O}(d+1)=\emptyset $ and $ U(\succsim_i, o^*)\cap \overline{O}(d)\neq\emptyset $. By Lemma \ref{lemma1}, $U(\succsim_i, o^*)\cap \overline{O}(d')=\emptyset $ for all $ d'\ge d+1 $. Then, $ \sum_{o\succsim_i o^*}p_{i,o}=\sum_{d'=1}^d x^n_i(d')=\sum_{o\in O}\big(\w_{i,o}-\w_{i,o}(d+1)\big) $ and $ \sum_{o\succsim_i o^*}p_{j,o}\le\sum_{d'=1}^d x^n_j(d')=\sum_{o\in O}\big(\w_{j,o}-\w_{j,o}(d+1)\big) $. So, 
	\[
	\sum_{o\succsim_i o^*}p_{j,o}-\sum_{o\succsim_i o^*}p_{i,o}
	\le\sum_{o\in O}\big[\big(\w_{j,o}-\w_{j,o}(d+1)\big)-\big(\w_{i,o}-\w_{i,o}(d+1)\big)\big].
	\]
	
	For every $ o\in O $ such that $ \w_{i,o}\ge \w_{j,o} $, bounded advantage implies that, for all $ 1\le d'\le d $, $ \w_{i,o}(d'+1)\ge \w_{j,o}(d'+1) $ and $ \lambda_{i,o}(d')\ge \lambda_{j,o}(d') $. So, 
	$
	\w_{j,o}-\w_{j,o}(d+1)=\sum_{d'=1}^d \lambda_{j,o}(d')x^*_o(d')\le  \sum_{d'=1}^d \lambda_{i,o}(d')x^*_o(d')=\w_{i,o}-\w_{i,o}(d+1).
	$
	Equivalently,
	\[
	\big(\w_{j,o}-\w_{j,o}(d+1)\big)-\big(\w_{i,o}-\w_{i,o}(d+1)\big)\le  0.
	\] 
	
	For every $ o\in O $ such that $ \w_{i,o}< \w_{j,o} $, bounded advantage implies that, for all $ 1\le d'\le d $, $ \w_{i,o}(d'+1)\le \w_{j,o}(d'+1) $ and $ \lambda_{i,o}(d')\le \lambda_{j,o}(d') $. In particular, $ \w_{i,o}(d+1)\le \w_{j,o}(d+1)  $. So,
	\[
	\big(\w_{j,o}-\w_{j,o}(d+1)\big)-\big(\w_{i,o}-\w_{i,o}(d+1)\big)\le \w_{j,o}-\w_{i,o}.
	\]
	
	Therefore, $
	\sum_{o\succsim_i o^*}p_{j,o}-\sum_{o\succsim_i o^*}p_{i,o}
	\le \sum_{o\in O:\w_{i,o}< \w_{j,o}} \big(\w_{j,o}-\w_{i,o} \big)$.
\end{proof}

\begin{proof}[\normalfont \textbf{Proof of Proposition \ref{prop:ordinalfairness}}]
	In any simple FEE economy, let $ p $ denote the allocation found by any extension of $ \psi^E $. Consider any object $ o $ and its any two owners $ i $ and $ j $ with $0< \w_{i,o}\le \w_{j,o} $. Then, $ \norm{p_i}=\w_{i,o} $ and $ \norm{p_j}=\w_{j,o} $. 
	
	For any object $ a $ such that $ p_{i,a}>0$, let $ d $ be the earliest step after which all objects in $ U(\succsim_i,a) $ become unavailable; that is, $ U(\succsim_i,a)\cap \overline{O}(d)\neq \emptyset $ and $ U(\succsim_i,a)\cap \overline{O}(d+1)=\emptyset $. Then, $ \sum_{o'\succsim_i a}p_{i,o'}=\sum_{d'=1}^d x^n_i(d') $. Object $ a $ must be available at step $ d $; that is, $ a\in \overline{O}(d) $. If $ a\notin O(d) $, then $ U(\succsim_i,a)\cap \overline{O}(d)\neq \emptyset $ implies the existence of available objects that $ i $ regards as indifferent with $ a $, which leads $ i $ to label her consumption of $ a $ as available. 	
	At each step $ d' $ with $ 1\le d' \le d $, the two agents obtain their respective favorite objects, and $ x^n_j(d')= x^n_i(d')$. Since $ a\in \overline{O}(d) $, by Lemma \ref{lemma1}, $ a\in \overline{O}(d') $. It means that, at each step $ d' $, $ j $ obtains weakly better objects than $ a $. Therefore, $ \sum_{o'\succsim_j a}p_{j,o'}\ge \sum_{d'=1}^d x^n_j(d')=\sum_{d'=1}^d x^n_i(d')= \sum_{o'\succsim_i a}p_{i,o'}$.

	Similarly, for any object $ a $ such that $ p_{j,a}>0$, if $ \sum_{o'\succsim_j a}p_{j,o'}\le \norm{p_i} $, we can switch the roles of $ i $ and $ j $ in the above argument to prove $\sum_{o'\succsim_j a}p_{j,o'}\le  \sum_{o'\succsim_i a}p_{i,o'} $. If $ \sum_{o'\succsim_j a}p_{j,o'}> \norm{p_i} $, let $ d $ be the earliest step after which all objects in $ U(\succsim_i,a) $ become unavailable. Then, similarly as above, $ a\in \overline{O}(d)  $. By Lemma \ref{lemma1}, $ a\in \overline{O}(d')  $ for all $ d'<d $.  Since $ \sum_{o'\succsim_j a}p_{j,o'}> \norm{p_i} $, $ i $ must exhaust her endowment before step $ d $. Therefore, at each step where $ i $ obtains a positive amount of an object, the object must be weakly better than $ a $ for $ i $. This means that $ \sum_{o'\succsim_i a}p_{i,o'} =\norm{p_i} $.
\end{proof}

\begin{proof}[\normalfont \textbf{Proof of Lemma \ref{lemma:bottleneck}}]
	We prove the lemma by induction.
	
	\textit{Base case.} We first prove that $ \overline{O}(d_1)\backslash \overline{O}(d_1+1)=\Gamma(X_1,O)$. Because all objects in $ \overline{O}(d_1)\backslash \overline{O}(d_1+1) $ are assigned to agents at the end of step $ d_1 $, $ \overline{O}(d_1)\backslash \overline{O}(d_1+1)\subseteq \Gamma(X_1,O) $. Suppose that there exists $ i\in X_1 $ and $ o\in C_i $ such that $ o\notin \overline{O}(d_1)\backslash \overline{O}(d_1+1) $. Since $ d_1 $ is the first critical step, $ \overline{O}(d_1)=O $. Therefore, $ o\in \overline{O}(d_1+1) $. But this means that $ i $ should label her consumption of the objects from $ \overline{O}(d_1)\backslash \overline{O}(d_1+1) $ as available at step $ d_1+1 $, which is a contradiction. Therefore, $
	\overline{O}(d_1)\backslash \overline{O}(d_1+1)= \Gamma(X_1,O)$.
	
	We then prove that $X_1=X^*_1$. Define $t_1=\frac{|\Gamma(X_1,O)|}{|X_1|}$.  Stepwise EEET implies that all agents obtain equal amounts of net consumption at each step. Since no objects become unavailable before step $ d_1 $, at the end of step $ d_1 $, the amount of acceptable objects each agent obtains must be equal to $ t_1 $. 
	For any nonempty $ Y\subseteq X_1 $, $ \Gamma(Y,O)\subseteq \Gamma(X_1,O)= \overline{O}(d_1)\backslash \overline{O}(d_1+1)$. Therefore, $\frac{|\Gamma(Y,O)|}{|Y|}=t_1$. 
	For any nonempty $ Y\subseteq I $ such that $ Y\backslash X_1\neq \emptyset $, it must be that, for each $ j\in Y\backslash X_1 $, $ C_j\cap \overline{O}(d_1+1)\neq \emptyset $, because otherwise $ j\in X_1 $. Since each $ i\in Y $ obtains $ t_1 $ of acceptable objects at the end of step $ d_1 $ and each $ j\in Y\backslash X_1 $ has acceptable objects that are still available after step $ d_1 $, it must be that
	$\frac{|\Gamma(Y,O)|}{|Y|}>t_1$. Therefore, 
	\[
	X_1=\arg\min_{Y\subseteq I}\frac{|\Gamma(Y,O)|}{|Y|},
	\]and $ X_1 $ is the solution of the largest cardinality. This means that $ X_1=X^*_1 $.
	
	\textit{Induction step.} Suppose that for all $ \ell=1,\ldots,k-1 $, $X_\ell=X^*_\ell$, and $ \overline{O}(d_\ell)\backslash \overline{O}(d_\ell+1)=\Gamma(X^*_\ell,P_{\ell-1})$. We want to prove that $X_k=X^*_k$ and $ \overline{O}(d_{k})\backslash \overline{O}(d_{k}+1)=\Gamma(X^*_{k},P_{k-1}) $, where $ P_{k-1}= \overline{O}(d_{k})$. Define $ Z_{k-1}=I\backslash (\cup_{\ell=1}^{k-1} X^*_\ell) $. Since the agents among $ \cup_{\ell=1}^{k-1} X^*_\ell $ receive objects only from $ O\backslash \overline{O}(d_{k}) $, $ X_{k}\subseteq Z_{k-1} $ and $ \overline{O}(d_{k})\backslash \overline{O}(d_{k}+1)\subseteq \Gamma(X_{k},\overline{O}(d_{k})) $.
	
	Suppose that there exists $ i\in X_{k} $ and $ o\in C_i $ such that $ o\in \overline{O}(d_{k}+1) $. Then, $ i $ should label her consumption of the objects from $ \overline{O}(d_{k})\backslash \overline{O}(d_{k}+1) $ as available at the beginning of step $ d_{k}+1 $, which is a contradiction. Therefore, $\overline{O}(d_{k})\backslash \overline{O}(d_{k}+1)= \Gamma(X_{k},\overline{O}(d_{k}))$.  
	Then, we can use arguments similar to those in the base case to prove that
	\[
	X_{k}=\arg\min_{Y\subseteq Z_{k-1}}\frac{|\Gamma(Y,\overline{O}(d_{k}))|}{|Y|},
	\] 
	and $ X_{k} $ is the solution of the largest cardinality. It means that $ X_{k} =X^*_{k} $.
\end{proof}

\begin{proof}[\normalfont \textbf{Proof of Proposition \ref{prop:dich}}]
	It is implied by Lemma \ref{lemma:bottleneck}.
\end{proof}

\end{document}